\newtheorem{theorem}{Theorem}
\begin{document}
\def\Giulia{\bf\color{red}}
\def\bef{\begin{figure}}
\def\eef{\end{figure}}
\newcommand{\ans}{ansatz }
\newcommand{\be}[1]{\begin{equation}\label{#1}}
\newcommand{\beq}{\begin{equation}}
\newcommand{\ee}{\end{equation}}
\newcommand{\beqn}[1]{\begin{eqnarray}\label{#1}}
\newcommand{\eeqn}{\end{eqnarray}}
\newcommand{\bd}{\begin{displaymath}}
\newcommand{\ed}{\end{displaymath}}
\newcommand{\mat}[4]{\left(\begin{array}{cc}{#1}&{#2}\\{#3}&{#4}
\end{array}\right)}
\newcommand{\matr}[9]{\left(\begin{array}{ccc}{#1}&{#2}&{#3}\\
{#4}&{#5}&{#6}\\{#7}&{#8}&{#9}\end{array}\right)}
\newcommand{\matrr}[6]{\left(\begin{array}{cc}{#1}&{#2}\\
{#3}&{#4}\\{#5}&{#6}\end{array}\right)}
\newcommand{\cvb}[3]{#1^{#2}_{#3}}
\def\lsim{\raise0.3ex\hbox{$\;<$\kern-0.75em\raise-1.1ex
e\hbox{$\sim\;$}}}
\def\gsim{\raise0.3ex\hbox{$\;>$\kern-0.75em\raise-1.1ex
\hbox{$\sim\;$}}}
\def\abs#1{\left| #1\right|}
\def\simlt{\mathrel{\lower2.5pt\vbox{\lineskip=0pt\baselineskip=0pt
           \hbox{$<$}\hbox{$\sim$}}}}
\def\simgt{\mathrel{\lower2.5pt\vbox{\lineskip=0pt\baselineskip=0pt
           \hbox{$>$}\hbox{$\sim$}}}}
\def\unity{{\hbox{1\kern-.8mm l}}}
\newcommand{\eps}{\varepsilon}
\def\ep{\epsilon}
\def\ga{\gamma}
\def\Ga{\Gamma}
\def\om{\omega}
\def\omp{{\omega^\prime}}
\def\Om{\Omega}
\def\la{\lambda}
\def\La{\Lambda}
\def\al{\alpha}
\def\beq{\begin{equation}}
\def\eeq{\end{equation}}
\newcommand{\ov}{\overline}
\renewcommand{\to}{\rightarrow}
\renewcommand{\vec}[1]{\mathbf{#1}}
\newcommand{\vect}[1]{\mbox{\boldmath$#1$}}
\def\tm{{\widetilde{m}}}
\def\mcirc{{\stackrel{o}{m}}}
\newcommand{\Dm}{\Delta m}
\newcommand{\dm}{\varepsilon}
\newcommand{\tanb}{\tan\beta}
\newcommand{\nbar}{\tilde{n}}
\newcommand\PM[1]{\begin{pmatrix}#1\end{pmatrix}}
\newcommand{\up}{\uparrow}
\newcommand{\down}{\downarrow}
\newcommand{\refs}[2]{eqs.~(\ref{#1})-(\ref{#2})}
\def\omE{\omega_{\rm Ter}}
\newcommand{\eqn}[1]{eq.~(\ref{#1})}
%

\newcommand{\DSUSY}{{SUSY \hspace{-9.4pt} \slash}\;}
\newcommand{\DCP}{{CP \hspace{-7.4pt} \slash}\;}
\newcommand{\mc}{\mathcal}
\newcommand{\gr}{\mathbf}
\renewcommand{\to}{\rightarrow}
\newcommand{\gtc}{\mathfrak}
\newcommand{\wh}{\widehat}
\newcommand{\br}{\langle}
\newcommand{\kt}{\rangle}


\def\lsim{\mathrel{\mathop  {\hbox{\lower0.5ex\hbox{$\sim$}
\kern-0.8em\lower-0.7ex\hbox{$<$}}}}}
\def\gsim{\mathrel{\mathop  {\hbox{\lower0.5ex\hbox{$\sim$}
\kern-0.8em\lower-0.7ex\hbox{$>$}}}}}

\def\nn{\\  \nonumber}
\def\de{\partial}
\def\brf{{\mathbf f}}
\def\bbf{\bar{\bf f}}
\def\bF{{\bf F}}
\def\bbF{\bar{\bf F}}
\def\bA{{\mathbf A}}
\def\bB{{\mathbf B}}
\def\bG{{\mathbf G}}
\def\bI{{\mathbf I}}
\def\bM{{\mathbf M}}
\def\bY{{\mathbf Y}}
\def\bX{{\mathbf X}}
\def\bS{{\mathbf S}}
\def\bb{{\mathbf b}}
\def\bh{{\mathbf h}}
\def\bg{{\mathbf g}}
\def\bla{{\mathbf \la}}
\def\bmu{\mathbf m }
\def\by{{\mathbf y}}
\def\bmu{\mbox{\boldmath $\mu$} }
\def\bsig{\mbox{\boldmath $\sigma$} }
\def\bunity{{\mathbf 1}}
\def\cA{{\cal A}}
\def\cB{{\cal B}}
\def\cC{{\cal C}}
\def\cD{{\cal D}}
\def\cF{{\cal F}}
\def\cG{{\cal G}}
\def\cH{{\cal H}}
\def\cI{{\cal I}}
\def\cL{{\cal L}}
\def\cN{{\cal N}}
\def\cM{{\cal M}}
\def\cO{{\cal O}}
\def\cR{{\cal R}}
\def\cS{{\cal S}}
\def\cT{{\cal T}}
\def\eV{{\rm eV}}

\numberwithin{equation}{section}

\vspace{6mm}

\large
 \begin{center}
  {\Large \bf  
Generalized Holographic Principle, Gauge Invariance and the Emergence of Gravity \`a la Wilczek}

\end{center}

\vspace{0.1cm}

\begin{center}
{\large Andrea Addazi$\!\!\phantom{a}^{a)\,b)}$, Pisin Chen$\!\!\phantom{a}^{c)\,d)\, e)}$, Filippo Fabrocini$\!\!\phantom{a}^{f)}$, Chris Fields$\!\!\phantom{a}^{g)}$, \\
Enrico Greco$\!\!\phantom{a}^{h)}$, Matteo Lulli$\!\!\phantom{a}^{i)}$, Antonino Marcian\`o$\!\!\phantom{a}^{j)\,k)}$\footnote{E-mail: \, 
marciano@fudan.edu.cn} \& Roman Pasechnik$\!\!\phantom{a}^{l)}$} 
\\
\vspace{0.6cm}
{\it $\phantom{a}^{a)}$ Center for Theoretical Physics, College of Physics Science and Technology, Sichuan University, 610065 Chengdu, China}\\ 
\vspace{0.3cm}
{\it $\phantom{a}^{b)}$ INFN sezione Roma {\it Tor Vergata}, I-00133 Rome, Italy, EU}\\ 
\vspace{0.3cm}
{\it $\phantom{a}^{c)}$ Leung Center for Cosmology and Particle Astrophysics,
National Taiwan University, Taipei, Taiwan 10617}\\
\vspace{0.3cm}
{\it $\phantom{a}^{d)}$ Department of Physics, National Taiwan University, Taipei, Taiwan 10617}\\
\vspace{0.3cm}
{\it $\phantom{a}^{e)}$ Kavli Institute for Particle Astrophysics and Cosmology,
SLAC National Accelerator Laboratory, Stanford University, Stanford, CA 94305, U.S.A.}\\
\vspace{0.3cm}
{\it $\phantom{a}^{f)}$ College of Design \& Innovation, Tongji University, Shanghai, China}\\
\vspace{0.3cm}
{\it $\phantom{a}^{g)}$ 23 Rue de Lavandi\`{e}res, 11160 Caunes Minervois, France, EU\\ 
Tel.: +33-(0)6-44-20-68-69}\\
\vspace{0.3cm}
{\it $\phantom{a}^{h)}$ Institut de Chimie Radicalaire, Aix-Marseille Universit\'e, Marseille, France, EU}\\
\vspace{0.3cm}
{\it $\phantom{a}^{i)}$
Department of Mechanics and Aerospace Engineering, Southern University of Science and Technology, Shenzhen, Guangdong 518055, China}\\
\vspace{0.3cm}
{\it $\phantom{a}^{j)}$ Department of Physics \& Center for Field Theory and Particle Physics,\\ 
Fudan University, 200433 Shanghai, China}\\
\vspace{0.3cm}
{\it $\phantom{a}^{k)}$ Laboratori Nazionali di Frascati INFN, Frascati (Rome), Italy, EU}\\
\vspace{0.3cm}
{\it $\phantom{a}^{l)}$ Department of Astronomy and Theoretical Physics, Lund University, S\"olvegatan 14A
S 223 62 Lund, Sweden, EU}
\end{center}

\vspace{1cm}
\begin{abstract}
\large
\noindent 
We show that a generalized version of the holographic principle can be derived from the Hamiltonian description of information flow within a quantum system that maintains a separable state. We then show that this generalized holographic principle entails a general principle of gauge invariance.
When this is realized in an ambient Lorentzian space-time, gauge invariance under the Poincar\'e group is immediately achieved. We apply this pathway to retrieve the action of gravity. The latter is cast \`a la Wilczek through a similar formulation derived by MacDowell and Mansouri, which involves the representation theory of the Lie groups SO$(3,2)$ and SO$(4,1)$.
\end{abstract}

\baselineskip = 20pt

\section{Introduction}
\noindent 
Almost one hundred years of attempts to quantize gravity suggest that physical perspective may be responsible for this failure \cite{Garay:1994en}. While continuing to seek an UV-complete theory of either General Relativity (GR) or one of its possible extensions \cite{Polchinski:1998rq,Polchinski:1998rr, Rovelli:2004tv, Modesto:2011kw,Modesto:2014lga}, an alternative option is to look at gravity as an emergent phenomenon \cite{Jacobson:1995ab, Barcelo:2005fc, VanRaamsdonk:2010pw, Verlinde:2010hp, Lee:2010bg, Swingle:2014uza, Chiang:2015pmz,Oh:2017pkr}. Among many possible instantiations of this simple idea stands a paradigm of emergence that aims at recovering gravity via its analogical similarity with Yang-Mills gauge theories. 
As remarked by Chen-Ning Yang, while electromagnetism is evidently a gauge theory, and the fact that gravity can be seen as such a theory is universally accepted, how this exactly happens to be the case must be still clarified.  Notable explorations along these lines have been provided in the past by Weyl \cite{weyl}, and more recently by MacDowell and Mansouri \cite{macman}, and Chamseddine and West \cite{chamwest}, with subsequent improvements by Stelle and West \cite{stewe}. 

At the same time, we heuristically note that gravity may naturally encode principles of information theory. Such consideration naturally follows pondering that gravity is the field that is involved in the very definition of both masses and spacetime distances, and that specifies the propagation velocities of point-like particles, and hence of information, through the geodesic equations. Thus it is reasonable to pursue a fundamental theory of gravity from this perspective. Indeed, the underlying graph-structure of information networks is a set of nodes and links --- this is reminiscent of the basis of the states in Loop Quantum Gravity \cite{Rovelli:2004tv}.

There have been huge achievements in the direction of a quantum-information based theory of gravity, with several different attempts developed so far --- see e.g. \cite{Faulkner:2013ica}. More generally, deep links between quantum information theory and an ``emergent'' quantum theory of observable physical systems have been developed by many studies \cite{dariano:16,Hamma:2010sc,hamma}.  It is not within the present scope to summarize this vast literature. Instead, we focus on a specific alternative approach: we show that when the holographic principle is reformulated from a semi-classical to a fully general, quantum-theoretic principle, gravity emerges as a gauge theory along the lines of the gauge formulation of gravity, as proposed by F.~Wilczek in \cite{wilczek}.

We start by showing in Sec.~\ref{2} that a generalized holographic principle (GHP) characterizes information transfer within any finite quantum system in a separable state.  The HP is recovered from this more general, purely-quantum principle by requiring covariance.  We then show in Sec.~\ref{3} that compliance with the GHP entails gauge invariance under the Poincar\'e group in an ambient Lorentzian space-time.  Hence the gauge principle has purely quantum-theoretic roots and characterizes all finite systems in separable states.  
We use this to retrieve the action of gravity in Sec.~\ref{4}. In Sec.~\ref{5}, we provide, as an example, an emergent theory of gravity, a theory of Yang-Mills gauge fields and Higgs pentaplets that is cast \`a la Wilczek. This is a formulation similar to a previous one envisaged by MacDowell and Mansouri, which involves the representation theory of the Lie group SO$(4,1)$, but without explicit symmetry breaking.  We finally summarize some conclusions in Sec.~\ref{6}, and suggest that the AdS/CFT and dS/CFT correspondences may naturally arise within this framework.

\section{Generalized Holographic Principle for finite quantum systems}
\label{2}

\subsection{Historical remarks on the genesis of the Holographic Principle}

Probably the most direct way to summarize the Holographic Principle (HP) is via its original statement by `t~Hooft \cite{tHooft:93}: 

\begin{quote}
{\it ``given any closed surface, we can represent all that happens inside it by degrees of freedom on this surface itself.''}
\end{quote}

\noindent
The path that led to the formulation of the HP can be traced from the Bekenstein's area law \cite{bekenstein:04} for black holes (BH), 
\begin{equation}
\label{Bek}
S = \frac{A}{4},
\end{equation}
where $S$ denotes the thermodynamic entropy of a BH and $A$ its horizon area in Planck units.  Bekenstein conjectured the existence of an upper bound, $S$ itself, to the entropy of any physical system contained within a bounded volume:

\begin{quote}
{\it ``the entropy contained in any spatial region will not exceed the area of the region's boundary.''}
\end{quote}
\noindent
Historically, this conjecture was first instantiated by Susskind \cite{susskind:95}, who implemented a mapping from volume to surface degrees of freedom for a general closed system. This was based on the assumption that all light rays that are normal to any element within the volume are also normal to the surface. Bousso \cite{bousso:02} then showed that it is actually covariance that induces the holographic limit on information transfer by light; he further provided several counterexamples showing the failure of a straightforward interpretation of the HP as a spacelike entropy bound. Instead, Bousso formulated covariant entropy bound: 
\begin{equation}
\label{Bo}
S(L(\Sigma))\leq\frac{A(\Sigma)}{4},
\end{equation} 
with $A(\Sigma)$ denoting the area in Planck units of a (typically but not necessarily \cite{bousso:02}) closed surface $\Sigma$, and $L(\Sigma)$ any light-sheet of $\Sigma$, defined as any collection of converging light rays that propagate from $\Sigma$ toward some focal point away from $\Sigma$. The bound \eqref{Bo} then refers to the entropy of the light-sheet $L(\Sigma)$.  This covariant formulation of the HP holds for the light-sheets of any surface $\Sigma$. BH emerge as special cases, for which the equality in \eqref{Bo} holds. 

We note that both \eqref{Bek} and \eqref{Bo} are semiclassical. The limits on the entropy $S$ that they impose are ``quantum'' only in their reliance on Planck units and hence a finite value of $\hbar$. The entropy itself is classical and of statistical origin, but the finite value of $\hbar$ restricts this thermodynamic entropy within the volume enclosed by $\Sigma$.  In the context of general relativity (GR), $\Sigma$ is a continuous classical manifold enclosing a continuous classical volume characterized by a real-valued metric. As 't~Hooft \cite{tHooft:93} points out, the HP renders  $S(L(\Sigma))$ independent of the metric inside $\Sigma$:

\begin{quote}
{\it ``The inside metric could be so much curved that an entire universe could be squeezed inside our closed surface, regardless how small it is.  Now we see that this possibility will not add to the number of allowed states at all.''}
\end{quote}
\noindent
It bears emphasis that ``allowed states'' in this context are {\it thermodynamic} states, i.e. states that can be counted by measuring energy transfer between the system and its external environment.  As made fully explicit by Rovelli in the case of BH \cite{rovelli:17, rovelli:19}, states that are effectively isolated (e.g. isolated for some time interval much larger than relevant interaction times) from the external environment do not contribute to $S(L(\Sigma))$.

While the demonstration by Maldacena \cite{maldecena:98} of a formal duality acting as an equivalence, at the level of the encoded information, between string quantum gravity on $d$-dimensional anti-de Sitter (AdS) spacetime and conformal quantum field theory (CFT) on its $d-1$-dimensional boundary has made the HP a centerpiece of quantum gravity research, its physical motivation remains that of 't~Hooft's conjecture, namely the inaccessibility of the BH interior summarized by the Bekenstein area law \eqref{Bek}. The HP is conjectured to be fully general, although it is quite mysterious why this should be the case.  As Bousso \cite{bousso:02} remarks, the HP retains a counterintuitive meaning:

\begin{quote}
{\it ``an apparent law of physics that stands by itself, both uncontradicted and unexplained by existing theories that may still prove incorrect or merely accidental, signifying no deeper origin.''}
\end{quote}
\noindent
Our goal in the next section is to place the HP on a much deeper intuitive footing, by generalizing it from a semi-classical to a fully quantum principle, one that is entirely independent of geometric considerations.

\subsection{Information transfer in finite, separable systems}
\label{finite}

Let $\mathbf{S} = AB$ be a closed quantum system characterized by a Hilbert space $\mathcal{H}_\mathbf{S}$ of finite dimension $dim(\mathcal{H}_\mathbf{S})$, and suppose that over some sufficiently long time interval $\bar{\tau}$, $\mathbf{S}$ maintains a separable state, i.e. $|\mathbf{S} \rangle = |AB \rangle = |A \rangle |B \rangle$ for all $t \leq \bar{\tau}$, where $t$ is a time parameter characterizing $\mathbf{S}$.  We can then write a Hamiltonian:

\begin{equation}
H_\mathbf{S} = H_A + H_B + H_{AB}
\end{equation}
\noindent
where $H_{AB}$ is the $A$-$B$ interaction.  Separability allows $H_{AB} = 0$ but requires $H_A, ~H_B \neq 0$.

We now assume $H_{AB} \neq 0$ and choose bases for $A$ and $B$ such that, for all $t \leq \bar{\tau}$:

\begin{equation} 
H_{AB}(t) = \beta^k k_B T^k \sum_i \alpha^k_i(t) M^k_i, \label{ham}
\end{equation}

\noindent
where $k = A$ or $B$, $i = 1 \dots N$ for finite $N \leq dim(\mathcal{H}_S)$, the $\alpha^k_i(t)$ are real functions with codomains $[0,1]$ such that:

\begin{equation}
\sum_i \int_{\Delta t} dt ~\alpha^k_i(t) = \Delta t \label{norm}
\end{equation}
\noindent
for every finite $\Delta t$, $k_B$ is Boltzmann's constant, $T^k$ is $k$'s temperature, $\beta^k \geq$ ln 2 is an inverse measure of $k$'s average per-bit thermodynamic efficiency\footnote{Here the efficiency relates to the thermodynamic transformations triggering the exchange of information bits among the two subsystems $A$ and $B$.} that depends on the internal dynamics $H_k$, and the $M^k_i$ are Hermitian operators with binary eigenvalues.  Given separability, we can interpret these $M^k_i$ as ``measurement'' operators that each transfer 1 bit between $A$ and $B$.  Here the condition $\beta^k \geq$ ln 2 assures compliance with Landauer's principle \cite{landauer:61}: each bit transferred from $A$ to $B$ by the action of some operator $M^A_i$ is paid for by the transfer of an energy $\beta^B k_B T^B$ from $B$ to $A$ and vice-versa.  ``Irreversible recording'' of the transferred bits by $A$ and $B$ corresponds\footnote{Notice that irreversibility is connected to the efficiency bound $1/\beta^k<1$.} to state changes:

\begin{equation}
|A \rangle|_t \rightarrow |A \rangle|_{t + \Delta t} \quad \mathrm{and} \quad |B \rangle|_t \rightarrow |B \rangle|_{t + \Delta t} \label{trans}
\end{equation}
\noindent
that maintain the separability of $\mathbf{S}$.  Given \eqref{norm}, the action required for $k$ to transfer $N$ bits in time $\Delta t$ is:

\begin{equation}
\int_{\Delta t} dt ~(\imath \hbar) \mathrm{ln} \mathcal{P}(t) = N \beta^k k_B T^k \Delta t \label{action}
\end{equation}
\noindent
where $\mathcal{P}(t) = \exp(-(\imath /\hbar) H_{AB} t)$.  Informational symmetry clearly requires $\beta^A T^A = \beta^B T^B$ during any finite $\Delta t$.

Let us now consider an interval $\tau \ll \bar{\tau}$ during which $A$ and $B$ exchange exactly $N$ bits.  In any such interval, the thermodynamic entropy $S(B)|_\tau$ measured by $A$ is clearly $N$ bits; the entropy $S(A)|_\tau$ measured by $B$ is similarly $N$ bits.  Coarse-graining time to an interval $\Delta t = n \tau \ll \bar{\tau}$ to allow $n$ $N$-bit measurements, both measured entropies remain $N$ bits.  Hence we have:

\begin{theorem}
Given any finite-dimensional quantum system $\mathbf{S} = AB$ that maintains a separable state $|AB \rangle = |A \rangle |B \rangle$ for $t \leq \bar{\tau}$, the information $S(B)$ obtainable by $A$ during any finite interval $\Delta t \ll \bar{\tau}$ is independent of $H_B$.
\end{theorem}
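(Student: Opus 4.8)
The plan is to show that the thermodynamic entropy $S(B)$ that $A$ can resolve is exactly the integer bit count already built into Eqs.~(\ref{ham})--(\ref{action}), and that this count never sees $H_B$. \emph{Step 1 (fixing the bit count).} First I would invoke the elementary interval $\tau \ll \bar{\tau}$ used just before the statement, in which $A$ and $B$ exchange exactly $N$ bits. That such a $\tau$ exists and carries precisely $N$ bits is forced by the normalization (\ref{norm}): integrating the $N$ activity functions $\alpha^A_i$ over $\tau$ gives $\tau$, and since each $M^A_i$ has binary eigenvalues and transfers one bit per unit of activity, the accumulated action $\int_\tau (\imath\hbar)\ln\mathcal{P}(t)\,dt = N\beta^A k_B T^A\,\tau$ of (\ref{action}) records $N$ Landauer-costed bits in $A$. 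By separability the transitions (\ref{trans}) keep $|\mathbf{S}\rangle$ a product state, so the only correlations $A$ can build with $B$ — hence the only thing $A$ can learn about $B$ — are those mediated by $H_{AB}$; therefore $S(B)|_\tau$ as measured by $A$ equals these $N$ bits.

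\emph{Step 2 (coarse-graining).} Next I pass to the coarse-grained interval $\Delta t = n\tau \ll \bar{\tau}$. Because the $M^A_i$ have binary eigenvalues, the configurations of $B$ distinguishable by $A$ are labelled by $N$-bit strings; iterating the elementary measurement $n$ times refreshes but does not enlarge this label set, so the measured entropy remains $S(B)|_{\Delta t} = N$ bits. Crucially, $N$ is the fixed combinatorial parameter $N \le \dim(\mathcal{H}_{\mathbf{S}})$ appearing in (\ref{ham}): a property of the chosen bases and of the $A$--$B$ channel, not of the internal generator $H_B$.

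\emph{Step 3 (closing off the dependence on $H_B$).} Finally I would rule out the two routes by which $H_B$ might enter. It appears through the efficiency $\beta^B$, which depends on the internal dynamics $H_B$; but the informational-symmetry relation $\beta^A T^A = \beta^B T^B$ together with Landauer's bound $\beta^k \ge \ln 2$ \cite{landauer:61} only constrains the energy \emph{cost per bit}, not the number of bits. And $H_B$ generates the internal evolution $|B\rangle|_t \to |B\rangle|_{t+\Delta t}$, which under separability acts only on the $B$ factor of the product state and is therefore invisible to every $M^A_i$ and to every recorded outcome. Hence neither the bit count $N$ nor the family $\{M^A_i\}$ depends on $H_B$, so $S(B) = N$ bits obtainable by $A$ is independent of $H_B$, which is the claim.

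The step I expect to be the main obstacle is Step 1 — making rigorous the identification of the analytic normalization (\ref{norm}) with an integer bit count, i.e.\ justifying that ``$A$ and $B$ exchange exactly $N$ bits in $\tau$'' is the correct reading of (\ref{action}). I would discharge this by taking (\ref{action}) as the \emph{definition} of the elementary transfer event — an accumulated action $N\beta^k k_B T^k \Delta t$ meaning $N$ Landauer-costed bits transferred — after which the time-coarse-graining argument given immediately before the theorem supplies the remainder.
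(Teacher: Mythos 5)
Your proposal is correct and takes essentially the same route as the paper: the information $A$ can obtain is exactly the $N$ bits fixed by the representation \eqref{ham} of $H_{AB}$, and the paper's (very brief) proof is just the observation that $H_B$ can be varied arbitrarily, subject to maintaining separability, without affecting $H_{AB}$. Your Step 3, explicitly checking that neither $\beta^B$ (which only sets the energy cost per bit) nor the internal evolution of $|B\rangle$ can alter the bit count, is a somewhat more careful treatment of a point the paper leaves implicit, but the core argument is identical.
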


\begin{proof}
The information $S(B)$ obtainable by $A$ during any finite interval $\Delta t \ll \bar{\tau}$ is just the information transferred by $H_{AB}$, which is specified entirely independently of $H_B$.  Indeed $H_B$ and hence $H_\mathbf{S}$ can be varied arbitrarily, provided that $B$ has sufficient degrees of freedom to maintain the separability of $\mathbf{S}$, without affecting $H_{AB}$.  
\end{proof}
\noindent
Note that if the assumption of separability is dropped and two subsystems cannot be distinguished, Eq. \eqref{trans} fails, the von Neumann entropy of $|AB \rangle$ remains zero, and no information is transferred by $H_{AB}$.

\subsection{The HP is a special case of the GHP}

Theorem 1 places a principled restriction on information transfer {\it within} any separable quantum system; as noted above, the notion of information transfer {\it within} a non-separable (i.e. entangled) quantum system is meaningless.  The HP is a principled restriction on information transfer {\it within} a semiclassical system that is separable by definition.  Hence the two should be related.  This relation can be made explicit by stating:

\begin{quote}
\noindent
\textbf{Generalized Holographic Principle} (GHP): Given any finite-dimensional quantum system $\mathbf{S} = AB$ meeting the conditions of Theorem 1, the thermodynamic entropies of $A$ and $B$ over a coarse-grained time, over which $A$ and $B$ only interact through Eq.~\eqref{ham}, are $S(B) = S(A) = N$ bits, where $N$ is the number of operators in the representation \eqref{ham} of $H_{AB}$.
\end{quote}
\noindent
We note that this GHP is formulated entirely independently of geometric assumptions; in particular, it is prior to any assumption of general covariance.

To make the physical meaning of the GHP clear, let us consider a specific example.  Suppose $A$ and $B$ interact by alternately preparing and measuring the states of $N$ shared, non-interacting qubits as shown in Fig. 1.  We can consider that, in a time interval $\tau$, $A$ prepares the $N$ qubits in her choice of basis, i.e. using her $M^A_i$ and then $B$ makes measurements in his choice of basis, i.e. using his $M^B_j$.  In the next interval $\tau^{\prime}$, $B$ prepares and $A$ measures, and so forth.  The prepared and measured bit values will be preserved, i.e. $A$ and $B$ will employ the same ``language,'' only if they share a basis, in this case a $z$ axis, which functions as a shared quantum reference frame (QRF) \cite{bartlett:07, fm:19}.  Clearly, however, $S(B) = S(A) = N$ bits in every interval of length at least $2 \tau$, independently of whether $A$ and $B$ share a QRF.

\begin{figure}
\centering
\includegraphics[width=15 cm]{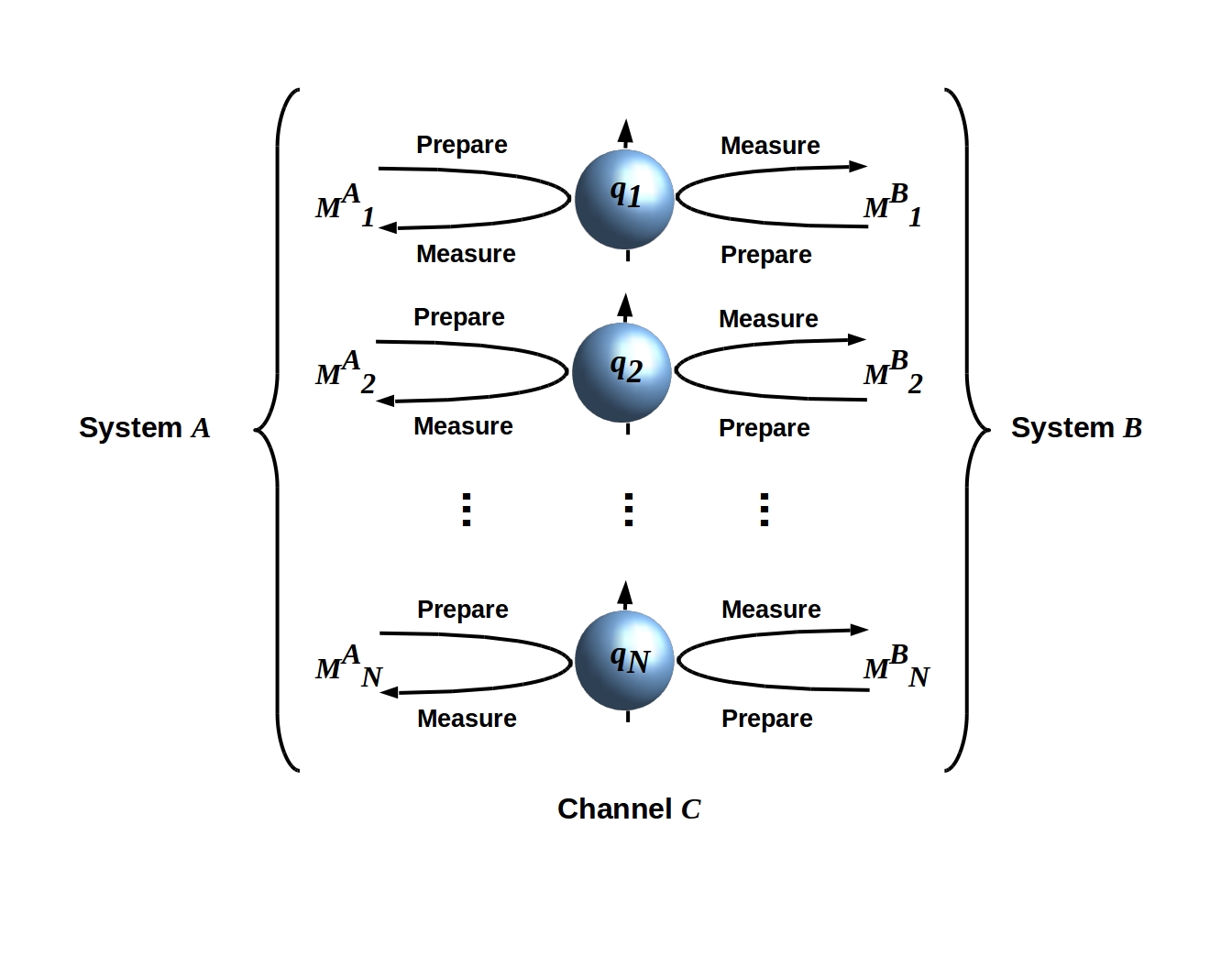}
\caption{Systems $A$ and $B$ exchange bits via an ancillary array of non-interacting qubits.  Bit values are preserved if a quantum reference frame (here, a $z$ axis) is shared \textit{a priori}.}
\end{figure}

The qubit-mediated interaction shown in Fig. 1 still makes no geometric assumptions.  If we now imagine, however, that the array of qubits is embedded at maximal density in an ancillary real 2-dimensional surface $\Sigma$, and further require that the bit values generated by the actions of the $M^A_i$ (respectively, $M^B_j$) must be transferred to a distant point within $A$ (respectively, $B$) by photons (or any other quantum carrier consistent with the local symmetry/invariance that is present), Eq. \eqref{Bek} and \eqref{Bo}, i.e. the usual covariant HP, results by the reasoning of \cite{bekenstein:04, bousso:02}.  The surface $\Sigma$ can, in this case, naturally be interpreted as a ``boundary'' between $A$ and $B$ at which they interact.  The self-interactions $H_A$ and $H_B$ are then naturally interpreted as characterizing the ``bulk'' of $A$ and $B$, respectively.

We note that the GHP provides, when $H_{AB}$ is assumed to act across an $A$-$B$ boundary, an immediate and intuitive explanation of the decoherence of $B$ relative to $A$ and vice-versa \cite{fm:19, fields:19}.  Hence the GHP provides a natural account of the ``emergence of classicality'' {\it within} separable quantum systems: if $|AB \rangle$ is separable as $|A \rangle |B \rangle$, ``classicality'' characterizes the bit values ``encoded on'' the $A$-$B$ boundary, i.e. the boundary at which $H_{AB}$ acts.  There are, in other words, no classical systems, just classical information.

Both Theorem 1 and the GHP above are formulated for fixed $N$.  Generalizing to the case of $N$ varying slowly, i.e. remaining piecewise constant in time for intervals $\tau \ll \Delta t \ll \bar{\tau}$, is straightforward.  Therefore, only the constant $N$ case is needed in what follows.

\section{Poincar\'e symmetries and gauge invariance}
\label{3}

\subsection{The GHP requires gauge invariance for finite, separable systems}

Theorem 1 and hence the GHP restricts access to information, and so states an invariance: the information $S(B)$ is invariant under changes in $H_B$ (and vice-versa), provided $H_{AB}$ remains fixed and separability is maintained.  Gauge invariance for the ``bulk'' Hamiltonians $H_A$ and $H_B$ clearly follows.

\begin{theorem}
In any $\mathbf{S} = AB$ compliant with the GHP, the bulk interactions $H_A$ and $H_B$ are gauge invariant.
\end{theorem}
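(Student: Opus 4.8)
\noindent\emph{Proof proposal.} The plan is to read Theorem~1 not as a statement about ignorance but as an invariance, and then to exhibit the group realizing that invariance. First I would note that Theorem~1 asserts that the information $S(B)$ accessible to $A$ is unchanged under \emph{any} modification $H_B \mapsto H_B'$ that preserves the separability of $|AB\rangle$ and leaves the boundary interaction $H_{AB}$ of Eq.~\eqref{ham} fixed (and symmetrically with $A$ and $B$ interchanged). Among such modifications, the ones I would single out are the unitary conjugations $H_B \mapsto U_B H_B U_B^{\dagger}$ by any unitary $U_B$ on $\mathcal{H}_B$ that commutes with every boundary operator $M^B_i$ (or at worst permutes them): such a conjugation leaves the spectrum of $H_B$ intact — hence the per-bit thermodynamic efficiency encoded in $\beta^B$, which depends on $H_B$ only through its spectral/statistical content, and therefore $H_{AB}$ itself — and commuting with the $M^B_i$ guarantees the representation \eqref{ham} is reproduced verbatim. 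Theorem~1 then says precisely that none of these transformations changes anything $A$ can learn about $B$.

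Second, I would verify that these admissible transformations form a group. Writing $\mathcal{G}_B$ for the unitaries on $\mathcal{H}_B$ commuting with all $M^B_i$, closure under composition, the presence of the identity, and the existence of inverses are inherited from $U(\mathcal{H}_B)$, so $\mathcal{G}_B$ is a group; likewise $\mathcal{G}_A$ on $\mathcal{H}_A$, and the full redundancy group is $\mathcal{G} = \mathcal{G}_A \times \mathcal{G}_B$. By construction $\mathcal{G}$ acts on the bulk Hamiltonians $H_A, H_B$ while leaving invariant every quantity the GHP counts as physical: the transferred bit strings, the entropies $S(A) = S(B) = N$, and $H_{AB}$.

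Third, I would close the argument by interpretation: a group of transformations that reshuffles the descriptive data $H_A, H_B$ while fixing all observable content is, by definition, a gauge group, so $H_A$ and $H_B$ are gauge potentials whose physics is the $\mathcal{G}$-invariant part. Equivalently, any two bulk Hamiltonians related by an element of $\mathcal{G}$ are operationally indistinguishable for $A$ and $B$, so demanding a single physical description forces $H_A$ and $H_B$ to be $\mathcal{G}$-invariant. Since $\mathcal{G}_A$ and $\mathcal{G}_B$ act independently on the two subsystems, once $A$ and $B$ are realized as localized regions this is a \emph{local} redundancy, i.e. a gauge symmetry in the field-theoretic sense — exactly the hook used in Sec.~\ref{3} to upgrade it to Poincar\'e gauge invariance in an ambient Lorentzian space-time.

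The main obstacle I expect is conceptual rather than computational: pinning down what ``gauge invariant'' should mean operationally and checking that the candidate transformations genuinely form a group that rigidly preserves the datum $H_{AB}$. Concretely, one must split $\mathcal{H}_B$ into the boundary sector carrying the $M^B_i$ (held rigid) and the bulk sector (free), confirm that ``preserving separability'' is stable under composition rather than merely a pointwise constraint, and make sure that permitting $\beta^B$ and $T^B$ to covary does not enlarge the picture in a way that breaks the fixing of $H_{AB}$ — all of which the spectrum-preserving property of conjugation handles cleanly, which is why I would build the gauge group from $\mathcal{G}_A \times \mathcal{G}_B$ rather than from the full set of $H_B$-variations allowed by Theorem~1.
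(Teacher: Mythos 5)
Your proposal is sound and lands on the same crux as the paper, but it packages the argument differently. The paper's proof is a two-line contrapositive: gauge symmetry of $H_A$ could fail only if a local change of basis for $\mathcal{H}_A$ were observable, i.e.\ had an effect on $H_{AB}$, and any such effect is excluded by Theorem~1; all the weight is carried by the statement that the only observable content is what flows through $H_{AB}$, which is specified independently of the bulk Hamiltonians. You instead build the invariance group explicitly: you restrict to unitaries in the commutant of the boundary operators $\{M^k_i\}$ of Eq.~\eqref{ham}, check that they fix $H_{AB}$ (and, under your mild extra assumption that $\beta^k$ depends on $H_k$ only through basis-independent data, fix the coefficients too), verify the group axioms, and then identify unobservable-but-description-changing transformations as gauge. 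What your route buys is a concrete, self-contained gauge group $\mathcal{G}_A\times\mathcal{G}_B$ and a cleaner operational definition of ``gauge,'' without relying on the paper's implicit claim that an arbitrary basis change cannot feed back into $H_{AB}$. What it gives up is scope: the paper's proof asserts invariance under \emph{any} local change of basis for $\mathcal{H}_A$ or $\mathcal{H}_B$, whereas your group is by construction only the subgroup that rigidly preserves the boundary operators, so strictly speaking you prove invariance under a (natural, manifestly $H_{AB}$-preserving) subgroup of the redundancy the paper has in mind. Given how informally the theorem states ``gauge invariant,'' this narrowing does not break the result, but it is worth acknowledging explicitly that your $\mathcal{G}$ is a deliberately conservative choice rather than the full set of $H_B$-variations Theorem~1 allows.
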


\begin{proof}
The situation is completely symmetrical, so considering either $H_A$ or $H_B$ alone is sufficient.  Gauge symmetry for $H_A$ can only fail if a local coordinate change, i.e. a local change in basis vectors for $\mathcal{H}_A$, is observable, i.e. has an effect on $H_{AB}$.  Any such effect is ruled out by Theorem 1, which is satisfied by all systems compliant with the GHP.
\end{proof}
\noindent
Note that gauge invariance here depends explictly on separability: if $|AB \rangle \neq |A \rangle |B \rangle$, i.e. $\mathbf{S}$ is in an entangled state, the notion of a ``bulk'' Hamiltonian $H_A$ (or $H_B$) is meaningless.

Theorem 2, like Theorem 1 and the GHP, involves no assumptions about geometry.  We introduce these below, with QED as an initial example.

\subsection{QED and the consequences of the GHP}

As a specific example, consider a finite system $AB$, with $A$ comprising photons described by the usual electromagnetic vector field $A_\mu(x)$ and $B$ comprising fermionic particles, e.g. electrons described by the Dirac field $\psi(x)$, with $x$ a real space-time coordinate.  Clearly $A$ and $B$ only interact via a Hamiltonian $H_{AB}$.  The numbers of photons and electrons can be arbitrarily large, so the usual approximation of infinite degrees of freedom can be adopted for simplicity with no physical (i.e. observable) consequences.


We can make the presentation more precise at the mathematical level to illustrate the independence of observable results from coordinate (i.e. basis) transformations, even in the presence of the ancillary space-time geometry with points labelled by $x$. The local gauge freedom for the choice of the vector field $A_\mu$, which generalizes for quantum fields, i.e. systems with infinite degrees of freedom embedded in an ancillary space $x$, the invariance with respect to the choice of basis we discussed above, can be defined as in: 
\begin{equation} \label{g1}
A_\mu(x)\rightarrow A_\mu'(x)=A_\mu(x) -\partial_\mu \lambda(x)\,,
\end{equation}
where $\lambda(x)$ is a scalar function that is continuous with its first derivatives. Stepping out of the redundancy contained in \eqref{g1}, and denoting the components of the vector fields as $A_\mu=\{\phi, \vec{A}\}$, the electric and the magnetic fields, invariant under  \eqref{g1}, can be defined respectively as $\vec{E}=\dot{\vec{A}} - \nabla{\phi}$ and $\vec{B}=\nabla\wedge\vec{A}$. 

The quantization of the theory can be achieved following the standard path integral procedure. The partition function for the system $A$, namely the U$(1)$ gauge sector, casts:  
\begin{equation} \label{path}
\mathcal{Z}_A[A]= \int \mathcal{D} A_\mu \, e^{\imath \mathcal{S}(A)}\,,
\end{equation}
in which $\mathcal{D} A_\mu$ denotes the path integral measure over the copies of the gauge field, while $\mathcal{S}(A)$ denotes the classical action. The expectation value in the path integral of the theory of any functional observable $\mathcal{O}[A]$ is invariant under the gauge transformation \eqref{g1}. It is straightforward to show this fundamental property by comparing the expectation value of $\mathcal{O}[A]$ for different choices of the gauge fixing. On the other hand, different choices of the gauge fixing correspond to different choices of (local) observers, namely bases. But within these circumstances, the GHP implies that the path integral of any $\mathcal{O}[A]$ must be gauge invariant: choosing an observer is choosing a measurement basis, i.e. choosing a set of operators $M^k_i$ in \eqref{ham}, a choice that is independent of $A$ by Theorem 1. In simpler words, the {\it GHP implies gauge invariance}. 

The redundancy due to the gauge transformations, i.e. choices of $M^k_i$, can be factored out by fixing the gauge functional $G$, and then imposing gauge invariance. This is achieved by inserting in the path integral \eqref{path} the resolution of the identity:
\begin{equation} \label{gpath}
1= \int \mathcal{D} \lambda \delta (G(A^\lambda)) \, \left| {\rm det} \frac{\delta G(A^\lambda)}{\delta \lambda (x)}\right|\,,
\end{equation}
where as customary  $A^\lambda_\mu(x)=A_\mu(x)+\partial_\mu \lambda(x)$. The simplest choice of gauge functional is provided by the Lorentz functional $G(A)=\partial_\mu A^\mu$, which implements the gauge invariance of the path integral. Notice that under gauge transformations, the Lorentz functional transforms as: 
\begin{equation}
G(A)=\partial_\mu A^\mu \quad \rightarrow \quad G(A^\lambda)=\partial_\mu A^\mu +\Box\lambda\,.
\end{equation}
Having all set up, we can easily show the invariance of the path-integral:
\begin{eqnarray} 
\mathcal{Z}_{\mathbf{A}}[A]&&=\mathcal{N}\, \int \, \mathcal{D} A_\mu \, e^{\imath \mathcal{S}_{\rm A}(A)}\, \int \mathcal{D} \lambda \delta (G(A^\lambda)) \, \left| {\rm det} \frac{\delta G(A^\lambda)}{\delta \lambda (x)}\right|  \nonumber\\
&&=\mathcal{N} \left| {\det \Box}\right| \,\int\,  \mathcal{D} A_\mu \, e^{\imath \mathcal{S}_{\rm A}(A)}\,\int \mathcal{D} \lambda \delta (G(A^\lambda)) \nonumber\\
&&= \mathcal{N}'\, \,\int \mathcal{D} \lambda \, \mathcal{D} A_\mu \, e^{\imath \mathcal{S}_{\rm A}(A)} \, \delta (G(A))  \nonumber\\
&&= \mathcal{N}'' \, \int \, \mathcal{D} A_\mu \, e^{\imath \mathcal{S}_{\rm A}(A)} \, \delta (G(A))\,, \label{sassa}
\end{eqnarray} 
where the normalization functions $\mathcal{N}$, $\mathcal{N}'$ and  $\mathcal{N}''$ are not relevant, and have been safely disregarded. \\

The perspective provided by the GHP allows us to place a novel physical interpretation on the action of $G(A)$ in removing gauge redundancy, one that points toward a deeper understanding of role of the spatial coordinate $x$ in QFT.  As noted above, from a GHP perspective, gauge redundancy is the redundancy in the choice of measurement operators $M^k_i$.  This can equally be interpreted as redundancy in the choice of observers $k$.  But $k$, in this case, is just a quantum system $X$ that can be coupled to the quantum field $A$ while maintaining a separable joint state $|AX \rangle$.  The action of $G(A)$ renders these different observers redundant, effectively removing the dependence of (observations of) $A$ on $x$.  Hence we can now see what $x$ is doing in QFT: it is enforcing separability.  This is indeed an insight of Einstein \cite{einstein:48}:

\begin{quote}
Further, it appears to be essential for this arrangement of the things introduced in physics that, at a specific time, these things claim an existence independent of one another, insofar as these things ``lie in different parts of space.''
\end{quote}
\noindent
``Claiming an existence independent of one another'' obviously requires separability.\\

It is well known that the gauge condition can be cast in a more general form, employing an arbitrary function $f$. In this latter case, the gauge functional:
\begin{equation}
G_f(A)=\partial_\mu A^\mu -f 
\end{equation}
actually introduces a family of gauge-fixing terms. The independence of the physical observables on the gauge fixing is recovered through a process of average that is realized by integrating over $f$ the gauge fixing terms weighted with the factor $\exp (\, {-\frac{\imath}{2\xi} \int d^4 x f^2(x)})$, where $\xi$ is a positive parameter. The path integral in \eqref{sassa} then recasts as: 
\begin{eqnarray}
\mathcal{Z}_{\mathbf{A}}[A]&&=\mathcal{N} \, \int  \,\mathcal{D} f \,\mathcal{D} A_\mu \, e^{\imath \mathcal{S}_{\rm A}(A)- \frac{\imath}{2\xi} \int d^4 x \, f^2(x)} \nonumber\\
&&=\mathcal{N} \, \int \,\mathcal{D} A_\mu \, e^{\imath \mathcal{S}_{\rm A}(A)- \frac{\imath}{2\xi} \int d^4 x (\partial_\mu A^\mu(x))^2}
\,.
\end{eqnarray}
This invariance of the partition function under different choices of the gauge fixing condition, i.e. different choices of $f$, percolates into the gauge invariance of the expectation value of any observable $\mathcal{O}$. This manifests immediately, as one can recognize from the easy passages 
\begin{eqnarray}
\langle \mathcal{O}[A] \rangle_f&&=\mathcal{N} \, \int \,\mathcal{D} f \,\mathcal{D} A_\mu \, e^{\imath \mathcal{S}_{\rm A}(A)- \frac{\imath}{2\xi} \int d^4 x \, f^2(x)} \, \mathcal{O}[A]\nonumber\\
&&=\mathcal{N} \, \int \,\mathcal{D} A_\mu \, e^{\imath \mathcal{S}_{\rm A}(A)- \frac{\imath}{2\xi} \int d^4 x (\partial_\mu A^\mu(x))^2} \, \mathcal{O}[A]\nonumber\\
&&=  \mathcal{N} \, \int \,\mathcal{D} g \,\mathcal{D} A_\mu \, e^{\imath \mathcal{S}_{\rm A}(A)- \frac{\imath}{2\xi} \int d^4 x \, g^2(x)} \, \mathcal{O}[A] \nonumber\\
&&=\langle \mathcal{O}[A] \rangle_g\,,
\end{eqnarray}
where $f$ and $g$ are two different gauge-fixings. 

We may take into account now the other interaction partner in QED, the system $B$ composed by Dirac matter fields, for simplicity electrons. The path integral formulation of the system, composed by only one fermionic species $\psi$, then casts :
\begin{equation} \label{pathD}
\mathcal{Z}_{B}[\psi, \bar{\psi}]= \int \mathcal{D} \psi\,  \mathcal{D} \bar{\psi} \, e^{\imath \mathcal{S}_{\rm D}(\psi, \bar{\psi})}\,,
\end{equation}
where
\begin{equation} \label{SD}
\mathcal{S}_{\rm D}(\psi, \bar{\psi})=\int d^4x\, \bar{\psi} \left( \imath \gamma^\mu \partial_\mu -m\right)\psi\,.
\,
\end{equation}

The observable quantities $\mathcal{O}[\psi,\bar{\psi}]$ are bilinear operators in the fermionic fields $\psi$ and $\bar{\psi}$, which can be generally cast as $\mathcal{O}[\psi,\bar{\psi}]=\bar{\psi} \, O(\Gamma_I) \psi$, where the matrix $O$, with suppressed spinorial indices, depends on the elements of the Clifford algebra $\Gamma_I$, with $I=1\dots16$. 

As previously done for the bosonic system $A$, also for the system $B$ we can introduce a local gauge transformation having the meaning of a transformation among observers. Of course, this transformation cannot change the values of $\mathcal{O}$, which lead us to state the necessity of the symmetry 
\begin{equation} \label{gaugepsi}
\psi(x)\rightarrow  \psi'(x)=e^{\imath q \lambda(x)} \psi(x)\,, \qquad \bar{\psi}(x)\rightarrow  \bar{\psi}'(x)=e^{-\imath q \lambda(x)} \bar{\psi}(x) \,,
\end{equation}
where $q$ stands for the charge parameter. This is a U$(1)$ transformation, the generator of which commutes with the matrix $O$, ensuring the gauge invariance of any observable $\mathcal{O}$ under
\eqref{gaugepsi}. By the Noether theorem, selecting $\lambda(x)=\alpha\in \mathbb{R}$ to individuate an infinitesimal global transformation, the conserved charge is easily recovered 
$$Q=\int_\Sigma d^3x\, \psi^\dagger \psi\,,$$ 
where $\Sigma$ is a spatial hypersurface. This allows to cast U$(1)$ transformations acting on the Hilbert space of the theory as $U=e^{\imath \alpha Q}$.

\subsection{Extension to gravity and local Lorentz invariance}

So far we have first considered generic quantum systems with finite number of degrees of freedom, and stated the GHP within these simplified but completely general contexts, which do not necessarily require geometric concepts. In this sense, these notions shall be considered as pre-geometric. We have then extended our focus to continuous systems with an infinite number of degrees of freedom, focusing specifically on the paradigmatic example of QED, which is embedded on a flat Minkowski space-time. This embedding requires the addition of ancillary coordinates $x$ into the description of the system, which are necessary to specify its evolution and fully capture the dynamics as it is observed by spatially-separated observers. \\

Let us now include gravity in this construction, extending the arguments previously exposed. Our joint system $\mathbf{S} = AB$ shall be now composed by the gravitational degrees of freedom, described by the gravitational field $g_{\mu\nu}(x)$, the configurational space of which constitutes the system $A$, and by the matter degrees of freedom, the fields\footnote{Here we denote with $\phi^{\aleph}(x)$ any possible scalar, vectorial or spinorial matter fields.} $\phi^{\aleph}(x)$, the configurational space of which individuates the system $B$. The GHP then imposes the gauge invariance of the gravitational field, once the ancillary spatial coordinates have been introduced, in exactly the same way as discussed above. The role of the spatial coordinates is, as in the case of QED, labeling the manifold in which separable physical (in this case, matter) systems, e.g. observers, that interact with the field $A$ are embedded.  Symmetries fully depend in this picture on the signature of the embedding space, which we assume to be Lorentzian, so as to distinguish among time (required already by \eqref{ham}) and space (ancillary) coordinates. Thus the emergent symmetry will impose the invariance under local SO$(3,1)$ transformations: indeed, the underlying space structure we are considering has Lorentzian signature. This is also consistent with the fact that the tangent space to each point of the manifold is flat and Minkowski, and thus the whole construction specifies how the invariance under supertranslations in time and space emerges in this framework. \\

Besides local SO$(3,1)$ Lorentz symmetries, the theory of gravity also encodes symmetries under diffeomorphism. It is customary to deal with these latter in the Hamiltonian formulation of the theory.  This requires considering Lorentzian manifolds $\mathcal{M}_4$ that are diffeomorphic to $\mathbb{R}\times \Sigma$. This property enables a slicing of $\mathcal{M}_4$ into space-like hypersurfaces $\Sigma$ at instants of time $t\in \mathbb{R}$. This slicing is arbitrary, since there exist several possible choices to pick a diffeomorphism $\phi: \mathcal{M}_4 \rightarrow \mathbb{R}\times \Sigma$. Thus different time coordinates $\tau$ can be defined on $\mathcal{M}_4$, as the pullback of $t$ that is realized by $\phi$, or in the mathematical jargon $\tau=\phi*t$. This corresponds to different clocks for different observers (in the language of \S\ref{finite}, different ``tick'' intervals $\tau$), which nevertheless must not affect the definition of the physical observables, which are gauge invariant and diffeomorphic invariant. The theory is then recast on Cauchy surfaces, on which the gravitational field is captured by the restriction to the gravitational field to the three-metric on the slice $\Sigma$, namely $\!\!\!\phantom{a}^{3}g$, and to its ``time'' derivative, namely the extrinsic curvature $K$. These variables form the Cauchy data of the problem, and open the pathway to access the meaning of the ten components of the Einstein equations. Indeed, four of the Einstein equations turn out to be constraint equations that the Cauchy data must satisfy, while the other six are evolutionary equations that dictate the dynamics in time of the three-metric. \\   

A time-like unit vector $n$ must be then defined that is orthogonal to any tangent vector $v$ on $\Sigma$. Considering the metric two-form $g$ on $\mathcal{M}_4$, these two requirements amount to write in formulas that $g(n,n)=-1$ and $g(n,v)=0$. The direction of the unit vector $n$ is then defined to point towards the future. A derivative of any generic vector $v$ on $\Sigma$ can be then defined along any generic direction individuated by a vector $u$ on $\Sigma$. This is simply attained by projecting on $\Sigma$ and removing the component along the normal direction, i.e. $\nabla_u v= -g(\nabla_u v,n)n+ (\nabla_u v +g(\nabla_u v,n))$. The first term identifies the extrinsic curvature, $K(u,v) n= -g(\nabla_u v,n)n$, which measures the bending of the surface $\Sigma$ in the ambient manifold $\mathcal{M}_4$, by quantifying the failure of a generic vector of $\mathcal{M}_4$ to be still tangent to $\Sigma$ after we parallel translate it using the Levi-Civita connection on the ambient space $\mathcal{M}_4$. While considering the component tangential on $\Sigma$, we can write $\phantom{a}^{3}\nabla_u v= \nabla_u v +g(\nabla_u v,n)$, since it introduces the Levi-Civita connection on $\Sigma$ that is associated to the three metric $\!\!\!\phantom{a}^{3}g$. One can show that this is a connection, and satisfies the Leibnitz rule.\\

Given this framework, denoted as ADM \cite{ADM} in the literature, we can now introduce a time coordinate $\tau=\phi^* t$ on $\mathcal{M}_4$, which individuates a generic foliation $\{\tau=s\}$. The vector field $\partial_\tau$ on $\mathcal{M}_4$ then admits the generic decomposition along a tangential direction to $\Sigma$ and its normal, respectively individuated by the lapse function $N$ and the shift vector $\vec{N}$, i.e. $\partial_\tau N n + \vec{N}$. We can finally move to consider the Einstein theory of gravity and its Hamiltonian structure, which is a purely constrained system. We can first move from the Einstein-Hilbert action, which we review in Sec.~\ref{4}, and cast it in terms of the ADM variables, using the three-metric $\!\!\!\phantom{a}^{3}g$, the Levi-Civita connection on $\Sigma$, namely $\!\!\!\phantom{a}^{3}\Gamma$, the associated Riemann tensor $\!\!\!\phantom{a}^{3}R^a_{\ bcd}$ and the Ricci scalar $\!\!\!\phantom{a}^{3}R$ on $\Sigma$. Then the Lagrangian of the Einstein-Hilbert action reads, modulo a boundary term:
\begin{equation}
\mathcal{L}= \sqrt{\!\!\!\phantom{a}^{3}g}\,  N \left( \!\!\!\phantom{a}^{3}R +  {\rm tr}(K^2) - ({\rm tr}K)^2 \right)\,,
\end{equation}
which allows to define the symplectic structure of the system, namely: 
\begin{equation} \label{qpref}
q_{ij}=\!\!\!\phantom{a}^{3}g_{ij}\,, \qquad \qquad p^{ij}=\frac{\delta \mathcal{L}}{\delta \dot{q}_{ij}}=  \sqrt{q} \left( K^{ij} - ({\rm tr}K) q^{ij} \right)\,,
\end{equation}
with vanishing momenta conjugated to $\vec{N}$ and $N$, respectively $\vec{P}=0$ and $P=0$. 
In Eq.~\eqref{qpref} the extrinsic curvature is expressed in terms of the covariant derivatives on $\Sigma$, namely $\phantom{a}^{3}\nabla$, and the ADM variables, as:
\begin{equation}
K_{ij}=\frac{1}{2N} \left(  \dot{q}_{ij} -\!\! \phantom{a}^{3}\nabla_i N_j - \!\! \phantom{a}^{3}\nabla_j N_i  \right)
\,.
\end{equation}
The Hamiltonian density of the gravitational system, which can be calculated by the usual Legendre transform $\mathcal{H}(q_{ij}, p^{ij})=p^{ij} \dot{q}_{ij} - \mathcal{L}$, finally provides the Hamiltonian of the system $H=\int_\Sigma \mathcal{H} d^3x$. This latter can be recognized to be a totally constrained system:  
\begin{equation}
\mathcal{H} = \sqrt{q} \left( N  \mathcal{C} + N^i  \mathcal{C}_i \right)\,,
\end{equation}
with
\begin{equation}
\mathcal{C}= - \!\!\phantom{a}^{3}R +  q^{-1} \left( {\rm tr}(p^2) - ({\rm tr}\, p)^2 \right)
\,,
\qquad
\mathcal{C}_i= -2 \, \phantom{a}^{3}\nabla^j ( q^{-1/2} p_{ij})\,. 
\end{equation}
For simplicity, we assumed the hyperspace $\Sigma$ to be compact, so as to neglect the contribution otherwise provided by total divergences.\\ 

The first term of the Hamiltonian represents the Hamiltonian constraint, which generalizes time reparametrization, while the second term is the space-diffeomorphism constraint, respectively:   
\begin{equation}
C(N)=\int_\Sigma N \mathcal{C} \sqrt{q} d^3x\,, \qquad 
C(\vec{N})=\int_\Sigma N^i \mathcal{C}_i \sqrt{q} d^3x\,.
\end{equation}
Involving the continuous version of the Poisson brackets for the phase-space variables of the system, namely:
\begin{equation}\nonumber
\left\{ f,\,g  \right\} =\int_\Sigma \, \left\{ \frac{\partial f}{\partial p_{ij}(x) } \frac{\partial g}{\partial q^{ij}(x) } 
- \frac{\partial f}{\partial q^{ij}(x)} \frac{\partial g}{\partial p_{ij}(x)}
\right\} \sqrt{q} \,d^3 x\,,
\end{equation}
we may recover the algebra of constraints for the gravitational system, known as Dirac algebra, namely: 
\begin{eqnarray}
\left\{ C(\vec{N}),\, C(\vec{N}')  \right\}\!\!&=&\!\!C([\vec{N}, \vec{N}'])\,, \\
\left\{ C(\vec{N}),\, C( {N})  \right\}\!\!&=&\!\!C(\vec{N}\, N') \,, \\
\left\{ C(N),\, C(N')  \right\}\!\!&=&\!\!C( (N\partial^i N' - N' \partial^i N  )\partial_i ) \,.
\end{eqnarray}
The scalar and the vector constraints entering the total Hamiltonian can be cast in terms of the Einstein tensor components, contracted with the normal $n^\mu$ to the hypersurface $\Sigma$, i.e.:
\begin{eqnarray}\nonumber
\mathcal{C}=-2 G_{\mu \nu } n^\mu n^\nu\,, \qquad \qquad \mathcal{C}_i = -2 G_{\mu i } n^\mu \,,
\end{eqnarray}
while the spatial components $G_{ij}$ source the Hamilton equation of the phase-space variable, which are also captured by the Hamilton equation $\dot{q}^{ij}=\{ H, q^{ij} \}$ and $\dot{p}^{ij}=\{ H, p^{ij} \}$.

\section{Emergent Poincar\'e symmetries from an emergent gauge theory}
\label{4}

There is a deep similarity among gauge symmetries and diffeomorphisms, which becomes manifest as soon as both the gauge theories and gravity are formulated as principle bundle theories. This turns space-time symmetry into an emergent concept, similarly to what has been discussed in the previous section, while considering the consequences of the GHP. A celebrated framework in which gravity, and thus the Poincar\'e symmetries, are shown to be emergent from a gauge structure was provided by MacDowell and Mansouri. Nonetheless, the gauge symmetry is explicitly broken in this model. We briefly review here this theoretical framework, as a propedeutic element to the next section, where we review a model, due by Wilczek, in which gravity is emergent from a fully gauge-invariant theory.

\subsection{Einstein-Hilbert action}

Before introducing MacDowell-Mansouri gravity, it is useful to remind the Palatini formulation of gravity in the Einstein-Hilbert action. This casts in terms of the metric $g_{\mu\nu}$, its inverse $g^{\mu\nu}$, and its first and second derivatives, i.e.: 
\begin{equation}
S_{\rm EH}=\frac{1}{16 \pi G} \int d^4x \sqrt{-g} (R -  2 \Lambda)\,,
\end{equation}
with $R$ being the Ricci scalar. The Ricci scalar, encoding non-linearly first order derivatives and linearly second order ones, is defined as the contraction of the Riemann tensor, namely: 
\begin{equation}
R_{\alpha \beta \mu \nu} g^{\alpha \nu } g^{\beta \mu }=R\,,
\end{equation}
with the Riemann tensor expressed as:
\begin{equation}
R^\alpha_{ \ \beta \mu \nu}=\partial_\mu \Gamma^\rho_{\nu \sigma} - \partial_\nu \Gamma^\rho_{\mu \sigma} + \Gamma^\alpha_{\mu\lambda}\, \Gamma^\lambda_{\nu\beta} -  \Gamma^\alpha_{\nu\lambda}\, \Gamma^\lambda_{\mu\beta}  \,,
\end{equation}
with $\Gamma^\rho_{\mu\nu}$ Christoffel symbols. These latter are torsionless in the Einstein-Hilbert theory, namely $T^\rho_{\mu\nu}=\Gamma^\rho_{\mu\nu}-\Gamma^\rho_{\nu \mu}=0$. Varying with respect to the Christoffel symbol, one obtains the expression in terms of the metric and its derivatives: 
\begin{equation}
\Gamma^\rho_{\mu \nu }=\frac{1}{2} g^{\rho \sigma} (\partial_\mu g_{\nu \sigma} + \partial_\nu g_{\mu \sigma} - \partial_\sigma g_{\mu \nu} )\,.
\end{equation} 

A first-order formulation of the Einstein-Hilbert action of gravity is admitted in terms of the $SO(3,1)$-connection $\omega_\mu^{ab}$ and the tetrad one-form (frame-field) $e^a_\mu$, which is valued in the $\mathfrak{so}(3,1)$ algebra and carries an internal vector-index in the fundamental representation of $\mathfrak{so}(3,1)$. In terms of these fields, the action now casts: 
\begin{equation} \label{topological}
S_{\rm EH}=\frac{1}{64 \pi G} \int d^4x \, \epsilon_{abcd} \, \left( R_{\mu \nu}^{ \ \  ab} \, e_{\rho}^{ \ c} \,  e_{\sigma}^{ \ d} - \frac{\Lambda}{3} \, e_{\mu}^{ \  a} \,  e_{\nu}^{ \ b} \, e_{\rho}^{ \ c} \,  e_{\sigma}^{ \ d} \right) \epsilon^{\mu \nu \rho \sigma}
\,,
\end{equation} 
with:
\begin{eqnarray}
R_{\mu \nu}^{\ \ \ ab} \!&=&\! \partial_\mu \omega_\nu^{ab} - \partial_\nu \omega_\mu^{ab} +  \omega_{\mu \ c}^{\ a} \, \omega_{\nu}^{cb} -  \omega_{\nu \ c}^{\ a}  \, \omega_{\mu}^{cb}  
\,, \nonumber\\
T_{\mu \nu}^a \!&=&\! \mathcal{D}^\omega_\mu \, e_\nu^{\ a} -\mathcal{D}^\omega_\nu \, e_\mu^{\ a}\,,
\end{eqnarray}
which can be recast as $R^{ab}=d\omega^{ab} + \omega^a_{\ c} \wedge \omega^{cb}$ and $T^a= \mathcal{D}^\omega e^a= de^a + \omega^a_{\ c} \wedge e^{c}$.\\

A new topological invariant can be added to the action of gravity, without affecting the classical equation of motions. The Holst term can be added to the Einstein-Hilbert action, then leading to the new action that involves a real (Barbero-Immirzi) parameter $\gamma$, i.e.: 
\begin{eqnarray}
S_{\rm EH}=\frac{1}{64 \pi G} \! \int \! d^4x  \left(  \epsilon_{ab}^{\ \ cd} \! +\! \frac{1}{\gamma} \delta_{ab}^{\ \ cd} \right) \! R_{\mu \nu}^{ \ \ \  ab} \, e_{\rho \, c} \,  e_{\sigma \, d}  \ \epsilon^{\mu \nu \rho \sigma} 
\!-\!  \frac{\Lambda}{3} \, \epsilon_{abcd} \,  \, e_{\mu}^{ \  a} \,  e_{\nu}^{ \ b} \, e_{\rho}^{ \ c} \,  e_{\sigma}^{ \ d}  \epsilon^{\mu \nu \rho \sigma}, \,
\end{eqnarray}
the phase-space of which retains the symplectic form: 
\begin{eqnarray}
\left\{ \!\!\!\phantom{a}^{\gamma}\omega^a_i(x),  \, \mathcal{E}^i_a(y)\right\}= \gamma\, \delta(x-y) \, \delta^j_i \delta^a_b\,,
\end{eqnarray}
where now $a,b=1,2,3$, the connection reads $\!\!\!\phantom{a}^{\gamma}\omega^a_i=\omega^{0a}_i+\frac{\gamma}{2} \epsilon^{0abc}\omega_{i\ bc}$, the (Plebanski) two-form reads $\mathcal{E}^i_a=\frac{4}{G} \epsilon_{abc}  \epsilon^{ijk} \, e^b_j\,e^c_k $, and as usual $\{\,,\,\}$ denote the Poisson brackets.

\subsection{BF formulation of gravity}

The Einstein-Hilbert-Holst action admits a formulation within the  BF framework, as a deviation from the topological theory. The BF theory is defined as a $G$-principle bundle on a D-dimensional base manifold $\mathcal{M}_D$. The action is the Killing form contraction of the $\mathfrak{g}$ Lie algebra-valued (D-2)-form $B$ and the field strength of the $G$-connection $A$. The Lagrangian then reads:
\begin{eqnarray}
\mathcal{L}_{\rm BF}=  {\rm tr} \left( B \wedge F[A]  \right) \,,
\end{eqnarray}
which specialized to the case of $SO(3,1)$ casts: 
\begin{eqnarray}
\mathcal{L}_{\rm BF}^{\rm SO(3,1)}=  B^{ab} \wedge F^{ab} [A] \,,
\end{eqnarray}
where here $a,b=1,\dots4$ are indices in the fundamental representation of $SO(3,1)$. This automatically provides the Einstein-Hilbert action, when the two-form is constrained to appear as a bi-vector, i.e. $B^{ab}=\epsilon^{ab}_{\ \ cd} \,\,  e^c\wedge e^d$. 
The Einstein-Hilbert action for gravity, complemented with the Holst term, is then instantiated by the imposition of the so-called simplicity constraints on the $B$ Lie algebra-valued two form, namely: 
$$
B^{ab}=\pm \left( \epsilon^{ab}_{\ \ cd} + \frac{1}{\gamma } \delta^{ab}_{\ \ cd} \right) \,\,  e^c\wedge e^d\,.
$$

\subsection{MacDowell-Mansouri action}

Switching now to the MacDowell-Mansouri action, we introduce an extended (anti-de Sitter) group $SO(3,2)$. In the MacDowell-Mansouri action, this is explicitly broken down to its stabilizer, the Lorenz group $SO(3,1)$. The anti-de Sitter connections, composed by ten internal components, are labelled by indices of the fundamental representation of the extended group $A,B=1,2,\dots5$ as $A^{AB}=A^{AB}_\mu dx^\mu$. This decomposes into $A^{AB}_\mu=(A^{ab}_\mu,\, A^{a5}_\mu)$, with $A^{ab}_\mu=\omega^{ab}_\mu$ and $ A^{a5}_\mu=\ell^{-1} \, e^a)$, given the identification: 
\begin{equation}
\frac{\Lambda}{3}=-\frac{1}{\ell^2}    \nonumber\,.
\end{equation} 
Involving the $\mathfrak{so}(3,2)$ algebra-valued connections, the decomposition is recognized to encode both the generators of the Lorentz transformations $M_{ab}$ and the space-time translation $P_a$, i.e.:
\begin{equation}
A_\mu= \frac{1}{2} \omega^{ab}_\mu M_{ab} + \frac{1}{\ell}  e^a_\mu P^a =\frac{1}{2} A_\mu^{AB} M_{AB}   \nonumber\,,
\end{equation} 
having identified $M_{a5}=P_a$.

According to this decomposition, the connection casts as:
\begin{eqnarray} \label{SBA}
A^{AB}= \left(\begin{array}{c c}
\omega^{ab} &  \frac{1}{\ell}  e^a\\
 -\frac{1}{\ell}  e^b & 0
\end{array}
\right) \,,
\end{eqnarray}
and correspondently the curvature 2-form, with indices contracted with the structure constants of the $SO(3,2)$ group:  
$$
F^{AB}=dA^{AB} + A^{AC} \wedge A_C^{\ B}\,,
$$
decomposes into the $SO(3,1)$ valued components: 
\begin{eqnarray} \label{compF}
F^{AB}= \left(\begin{array}{c c}
R^{ab} + \frac{1}{\ell^2}  e^a \wedge e^b &  \frac{1}{\ell}  T^a\\
 -\frac{1}{\ell}  T^b & 0
\end{array}
\right) \,.
\end{eqnarray}
The MacDowell-Mansouri action deploys this extended formalism, but with the crucial underlying assumption of  (explicit) symmetry breaking: 
\begin{eqnarray} \label{SB}
F^{AB} \rightarrow \bar{F}^{AB}=F^{ab}\,.
\end{eqnarray}
The Einstein-Hilbert action of gravity can then be encoded in a general framework, moving from the action: 
\begin{eqnarray}\label{MMa}
S_{\rm MM}[A] =\frac{\ell^2}{64 \pi G} \int {\rm tr} \left( \bar{F} \wedge \star F  \right)\,,
\end{eqnarray}
where $\star$ denotes the gravitational Hodge dual. Using the curvature decomposition in Eq.~\eqref{compF}, the action recasts as: 
\begin{eqnarray}
S_{\rm MM}[A] =\frac{\ell^2}{64 \pi G} \int \left( R^{ab} + \frac{1}{\ell^2}  e^a \wedge e^b \right) \wedge 
\left( R^{cd} + \frac{1}{\ell^2}  e^c \wedge e^d \right) \epsilon_{abcd} \,,
\end{eqnarray}
The action then entails the Einstein-Hilbert action, with the cosmological term, and some 4D Euler characteristic: 
\begin{eqnarray}\nonumber
32 \pi G \, S_{\rm MM}[A] =S_{\rm EH} + \frac{1}{2 \ell^2} \int  \epsilon_{abcd} \, e_{\mu}^{ \  a} \,  e_{\nu}^{ \ b} \, e_{\rho}^{ \ c} \,  e_{\sigma}^{ \ d}  \, \epsilon^{\mu \nu \rho \sigma} + \frac{\ell^2}{2}  \, \int \,\epsilon_{abcd}\, R_{\mu \nu}^{\ \ \ ab} R_{\rho \sigma}^{\ \ \ cd} \  \epsilon^{\mu \nu \rho \sigma}
\,.
\end{eqnarray}
The equations of motion read: 
\begin{eqnarray}
\left( R^{ab}\wedge e^c+ \frac{1}{2 \ell^2} e^a \wedge e^b \wedge e^c \right) \epsilon_{abcd} =0\,,
\qquad T^a=0
\,.
\end{eqnarray}

The MacDowell-Mansouri theory admits a straightforward $BF$ formulation, involving $\mathfrak{so}(2,3)$-valued $B$ two-forms. The action then reads: 
\begin{eqnarray}
S=\int_M {\rm tr} \left( B \wedge F -\frac{G \Lambda}{6} \bar{B} \wedge \star \bar{B} \right)\,,
\end{eqnarray}
the equations of motion of which imply that: i) varying in $\delta A^{AB}$, the local $\mathfrak{so}(2,3)$ Gauss constraint holds; ii)  varying in $\delta B_{a5}$, torsion vanishes, i.e. $F^{a5}=T^a \ell^{-1}=0$; iii) varying in $\delta \bar{B}_{ab}$, the relations that provide the MacDowell-Mansouri in Eq.~\eqref{MMa} is recovered, namely $F^{ab}=G\Lambda/3 \epsilon^{abcd} B_{cd}$. This relation allows us to write the MacDowell-Mansouri action as the deformation of a topological gauge theory. The symmetry breaking, which is here explicit, occurs as regulated by a coefficient that is dimensionless, in natural units, and for current estimates of the cosmological constant value reads $G\Lambda\sim 10^{-68}$. This makes General Relativity adapt to be described, with excellent approximation, as the perturbative limit of a topological field theory. \\

Without entering into further details, we notice that the MacDowell-Mansouri theory can be cast in a similar fashion in terms of an internal de Sitter group $SO(4,1)$, again explicitly broken down to $SO(3,1)$. Within this latter case, the $SO(4,1)$-connection is decomposed into the generator of translations and the generators of rotations, namely the tetrads and the spin-connection $A^{AB}=(\omega^{ab}, \ell^{-1} \, e^a)$.

\section{Wilczek gravity}
\label{5}
Frank Wilczek proposed in \cite{wilczek} a model that is reminiscent of the theory formulated by MacDowell and Mansouri, with internal $SO(4,1)$, or equivalently $SO(3,2)$, gauge symmetry. The configuration variables are the gauge symmetry connection $A^{AB}_\alpha$, and the internal scalar field $\phi^A$, in the fundamental representation of the group. The crucial difference between the MacDowell-Mansouri model and the Wilczek model lies in the spontaneous symmetry breaking of the internal gauge group that is present in the latter. This indeed directly enables us to recover the metric structure of General Relativity from the principal bundle of the model proposed by Wilczek.\\

The Lagrangian considered in \cite{wilczek} is:
\begin{equation}\label{W}
\begin{split} 
\mathcal{L}_{\rm W}=
&\kappa_3\, \epsilon^{\alpha \beta \gamma \delta} \epsilon_{ABCDE} \, F^{AB}_{\alpha \beta} \, \nabla_\gamma \phi^C\, \nabla_\delta \phi^D \, \phi^E\,,
\end{split}
\end{equation}
in which $\nabla_\gamma \phi^C= \partial_\gamma \phi^C + A^C_{\gamma F}\phi^F$ denotes the $SO(4,1)$ gauge covariant derivative.
The field strength is defined as: 
\begin{eqnarray}
F^{AB}_{\alpha\beta}= \partial_\alpha A^{AB}_\beta -\partial_\beta A^{AB}_\alpha + f^{AB}_{\ \ \ CDEF}A^{CD}_\alpha A^{EF}_\beta,
\end{eqnarray}
in which $f^{ABCDEF}$ is the structure constant of SO(4,1), namely: 
\begin{eqnarray} 
f^{AB\ LM\ PQ} &=& \eta^{BL} \eta^{AP} \eta^{MQ} - \eta^{AL} \eta^{BP} \eta^{MQ} \nonumber\\ 
&-& \eta^{BM} \eta^{AP} \eta^{LQ} + \eta^{AM}  \eta^{BP} \eta^{LQ}\,.          
\end{eqnarray} 

Two novel terms with respect to the MacDowell-Mansouri action, were introduced in the Wilczek model:\\
 
\begin{enumerate}
\item 
the interaction potential of $\phi^A$, namely:
$$\mathcal{L}_{1}= \kappa_1 \left(\eta_{AB} \phi^A \phi^B -v^2 \right)^2\,.$$
By varying with respect to $\phi^A$, this term is recognized to be stationarized either for $\phi^A=0$ or when $|\phi|=v$. In the latter case, the choice $\phi^A=\delta^A_5 v$ implements the spontaneous symmetry breaking.\\

\item 
a term that constrains the determinant of the metric in the spontaneous symmetry broken phase: 
\begin{equation}\nonumber 
\mathcal{L}_2= \kappa_2 \biggl( J-\omega \biggl)^2 \,,
\end{equation}
in which: 
\begin{equation}\label{J}
J= \epsilon^{\alpha\beta\gamma\delta}\epsilon_{ABCDE}\,\phi^E \nabla_\alpha \phi^A\nabla_\beta \phi^B \nabla_\gamma \phi^C \nabla_\delta \phi^D\,.
\end{equation}
This term is stationarized when $J=\omega$, implying the unimodularity of gravity \cite{Unimodular}. The spontaneous symmetry breaking induces in \eqref{J} a reduced expression for $J$, namely:  
\begin{equation}
J= v^5 \epsilon^{\alpha\beta\gamma\delta}\epsilon_{abcd}A^{a5}_\alpha A^{b5}_\beta A^{c5}_\gamma A^{d5}_\delta\, ,
\end{equation}
$J$ denoting the determinant of the metric.\\
\end{enumerate}

The total Lagrangian proposed by Wilczek then reads: 
\begin{equation}\nonumber
\begin{split}
\mathcal{L}_W= \kappa_2 \biggl( J-\omega \biggl)^2 +\kappa_1 \left(\eta_{AB} \Phi^A \Phi^B -v^2 \right)^2  +
\kappa_3\, \epsilon^{\alpha \beta \gamma \delta} \epsilon_{ABCDE} \, F^{AB}_{\alpha \beta} \, \nabla_\gamma \phi^C\, \nabla_\delta \phi^D \, \phi^E.
\end{split}
\end{equation}

In order to unveil the emergence of the gravity, we may instantiate the spontaneous symmetry breaking Eq.~\eqref{SB} directly on the Lagrangian in Eq.~\eqref{W}, using the decompositions recovered in Eq.~\eqref{SBA} and \eqref{compF}, and then finding: 
\begin{eqnarray}
\mathcal{L}_{\rm W}\!\!&=&\!\! \kappa_3\, v^3\,  \epsilon^{\alpha\beta\gamma\delta}\epsilon_{abcd}\biggl[ \bigl(\partial_\alpha \omega^{ab}_\beta - \partial_\beta\omega^{ab}_\beta + f^{ab}_{cdef}\omega^{cd}_\alpha \omega^{ef}_\beta \bigl) - \Lambda \, e^a_\alpha e^b_\beta \biggl] e^c_\gamma e^d_\delta \\ 
\!\!&=&\!\! \kappa_3\,  v^3\, \epsilon^{\alpha\beta\gamma\delta}\epsilon_{abcd}\biggl[ F^{ab}_{\alpha\beta} - \Lambda \, e^a_\alpha e^b_\beta \biggl] e^c_\gamma e^d_\delta.  \nonumber
\end{eqnarray}
This equation corresponds to the Einstein-Hilbert action, introduced in Eq.~\eqref{topological}, plus the cosmological constant term. The unimodular term is not essential for our arguments, and we can neglect it here. 

We can recast the main term of the Lagrangian density proposed by Wilczek, rearranging as:
\begin{eqnarray} \label{LLL2}
\mathcal{L}\!=\!\kappa \,  \epsilon^{\alpha\beta\gamma\delta}F_{\alpha\beta}^{AB}B_{\gamma\delta}^{AB} ,
\end{eqnarray}
where 
$$B_{\gamma\delta}^{AB}= \nabla_{\gamma} \Phi^{C} \nabla_{\delta}\Phi^{D}\, \Phi^{E}\, \epsilon^{AB}_{\ \ \ \ CDE}$$
works as a simplicity constraint, which here drags the Wilczek model away from its topological phase. 

In the Higgs condensate phase, the $B\wedge F$ term is (nothing but) reduced to the Einstein-Hilbert term, and an emergent diffeomorphism invariance is recovered starting from a topological invariance, which is finally broken. In this way, moving from a background independent theory, because of the flatness of the connection, after the Higgs condensate phase an emergent metric tensor is obtained. \\

\section{Conclusions and outlooks}
\label{6}
We have shown here that a generalized version of the holographic principle can be derived from fundamental considerations of quantum information theory, in particular, the imposition of separability on a joint state.  This GHP entails gauge invariance.  We emphasized that as soon as this is instantiated in an ambient Lorentzian space-time, gauge invariance under the Poincar\'e group automatically follows.  Indeed, following this pathway we can recover the action of gravity.  We summarize several gauge-invariant models for gravity, including gravity cast \`a la Wilczek. This is a formulation of the Einstein theory of gravity similar to the one proposed by MacDowell and Mansouri, which involves the representation theory of the Lie groups SO$(3,2)$ and SO$(4,1)$. \\

As the GHP provides a natural and completely general distinction between ``bulk'' and ``boundary'' degrees of freedom, one that is independent of geometry, it would be worth to investigate whether the AdS/CFT and dS/CFT correspondences could fit within this framework. This would require complementing the GHP with the concept of the renormalization group flow. Indeed, group renormalization flow techniques might be actually considered to connect the fully symmetric SO$(3,2)$ and SO$(4,1)$ theory with the SO(3,1) broken symmetric phase.

\end{document}